\newcommand{\repeattheorem}[1]{%
  \begingroup
  \renewcommand{\thetheorem}{\ref{#1}}%
  \expandafter\expandafter\expandafter\theorem
  \csname reptheorem@#1\endcsname
  \endtheorem
  \endgroup
}
\xdef\csname reptheorem@#1\endcsname{%
    \unexpanded\expandafter{\BODY}%
  }%
\unskip\label{#1}\endtheorem
\tikzset{snake it/.style={decorate, decoration=snake}}
\tikzstyle{vertex} = [circle, draw=black, fill=black, inner sep=0pt,  minimum size=5pt]
\tikzstyle{edgelabel} = [circle, fill=white, inner sep=0pt,  minimum size=15pt]
\definecolor{blue}{RGB}{0,82,147} 
\definecolor{red}{RGB}{202,033,063}
\colorlet{green}{green!50!black}
\newcounter{lpnumber} \setcounter{lpnumber}{0}
\newcommand{\wt}{\mathsf{wt}}
\newcommand{\cost}{\mathsf{cost}}
\newcommand{\capac}{\mathsf{cap}}
\newcommand{\vote}{\mathsf{vote}}
\newcommand{\nbr}{\mathsf{Nbr}}
\newtheorem{new-claim}{Claim}
\begin{document}
\title{Perfect Matchings and Popularity in the Many-to-Many Setting\thanks{This result appeared in FSTTCS~2023~\cite{KM23}. This version gives a simple proof that corrects a technical gap in the proof for the many-to-many setting in the conference version.}}
\author{Telikepalli Kavitha\inst{1} \and Kazuhisa Makino\inst{2}}
\institute{Tata Institute of Fundamental Research, Mumbai; \email{kavitha@tifr.res.in} \and Research Institute for Mathematical Sciences, Kyoto University, Kyoto; \email{makino@kurims.kyoto-u.ac.jp}}
\maketitle
\pagestyle{plain}

\begin{abstract}
  We consider a matching problem in a bipartite graph $G$ where every vertex has a capacity and a strict preference order
  on its neighbors. Furthermore, there is a cost function on the edge set. We assume $G$ admits a perfect matching, 
  i.e., one that fully matches all vertices. It is only perfect matchings that are feasible for us and we are interested 
  in those perfect matchings that are {\em popular} within the set of perfect matchings. It is known that such matchings 
  (called popular perfect matchings) always exist and can be efficiently computed. 
  What we seek here is not any popular perfect matching, but a {\em min-cost} one. 
  We show a polynomial-time algorithm for finding such a matching; this is via a characterization of popular perfect matchings in 
  $G$ in terms of stable matchings in a colorful auxiliary instance. This is a generalization of such a 
  characterization that was known in the one-to-one setting.
\end{abstract}

\section{Introduction}
\label{sec:intro}
Consider a matching problem in a bipartite graph $G = (A \cup B, E)$ where every vertex $v \in A \cup B$ has a strict ranking of its neighbors. 
For convenience, vertices in $A$ will be called {\em agents} and those in $B$ will be called {\em jobs}. Every agent/job $v$ has an integral capacity 
$\capac(v) \ge 1$ and seeks to be matched to $\capac(v)$ many neighbors. This is the bipartite {\em $b$-matching} framework.

This model is also called the {\em many-to-many} setting in matchings under preferences. This is a well-studied model 
that is a natural generalization of several real-world settings such as matching students to schools and 
colleges~\cite{AS03,BCCKP18} and residents to hospitals~\cite{CRMS,NRMP}. Note that agents (i.e., students/residents) in these applications
have capacity~1 while in the many-to-many setting, agents are allowed to have capacity more than 1. For example, when $A$ is a set of students
and $B$ is a set of projects, the many-to-many setting allows for the flexibility that not only can several students work on one project, but
a student can be part of several projects. One such application is a management system for a conference where program committee members have preferences over submissions and each submission has a ranking of PC members as per their suitability -- in terms of their areas 
of expertise -- to review this paper.\footnote{PC members typically have rankings with ties over papers; we can instead assume a project evaluation committee where each member has a strict ranking of the project proposals that he/she would like to evaluate and each proposal needs a certain number of independent evaluations.}
Every PC member (also, assume every paper) has a capacity.

\begin{definition}
    A matching $M$ in a many-to-many instance $G = (A \cup B, E)$ is a subset of the edge set $E$ such that $|M(v)| \le \capac(v)$ for each vertex $v$, 
    where $M(v) = \{u: (u,v) \in M\}$. 
\end{definition}

Though vertices in both the sets $A$ and $B$ have capacities, note that $M$ can contain at most one copy of any edge since $M$ is a set (and not a multiset).
Let us assume that the input instance admits a perfect
matching, i.e., one that matches each vertex in $A \cup B$ fully up to its capacity. 

Such a model occurs in the conference management system application mentioned above where PC members have to place bids and then they are explicitly asked to bid again for submissions that have received too few bids. Thus it will be ensured that the input instance admits a matching where each PC member $a$ gets assigned to $\capac(a)$ many papers and each paper $b$ is assigned to $\capac(b)$ many PC members.
So we assume our input instance admits a perfect matching and we seek a {\em best} perfect matching as per vertex preferences. In the domain of matchings under preferences, {\em stable} matchings are usually regarded as the best matchings.

\paragraph{\bf Stable matchings.} A matching $M$ is stable if there is no edge that {\em blocks}~$M$, where an edge $(a,b) \notin M$ blocks $M$ 
if (i)~either $a$ has less than $\capac(a)$ partners in $M$ or $a$~prefers $b$ to its worst partner in~$M$ 
and (ii)~either $b$ has less than $\capac(b)$ partners in $M$ or $b$~prefers $a$ to its worst partner in~$M$.
Stable matchings always exist in $G$ and a natural modification of the Gale-Shapley algorithm~\cite{GS62} finds one. However stability is a strong and 
rather restrictive notion, e.g., stable matchings need not have maximum cardinality. 

Consider 
the following instance where $A = \{a,a'\}, B = \{b,b'\}$, and all vertex capacities are~1. The preferences of agents are as follows: 
$a: b \succ b'$ and $a': b$. That is, the top choice of $a$ is $b$ and its second choice is $b'$ while $a'$ has only one neighbor~$b$. Similarly, the
preferences of jobs are as follows: $b: a \succ a'$ and $b': a$.
This instance has only one stable matching $S = \{(a,b)\}$ that leaves $a'$ and $b'$ unmatched. Observe that this instance
has a perfect matching $M = \{(a,b'),(a',b)\}$ that matches all agents and jobs. However $M$ is not stable as the edge $(a,b)$ blocks $M$.

For the problem of selecting a {\em best} perfect matching, one with the least number of blocking edges would be a natural relaxation
of stability. But 
finding such a matching is NP-hard~\cite{BMM10} even in the one-to-one setting.
A well-studied relaxation of stability that offers a meaningful and tractable solution to the problem of finding a ``best perfect matching''
is the notion of {\em popularity}. 

\paragraph{\bf Popularity.}
Popularity is based on voting by vertices on matchings.
In the one-to-one setting, the preferences of a vertex over its neighbors extend naturally to preferences 
over matchings---so every vertex orders matchings in the order of its partners in these matchings. 
Popular matchings are weak {\em Condorcet winners}~\cite{Con85,condorcet} in this voting instance where vertices are voters and matchings are candidates.
In other words, a popular matching $M$ does not lose a head-to-head election against any matching~$N$
where each vertex $v$ either casts a vote for the matching in $\{M,N\}$ that it prefers or $v$ abstains from voting if its assignment is the same in 
$M$ and $N$. 

For any vertex $v$ and its neighbors $u$ and $u'$, let $\vote_v(u,u')$ be $v$'s vote for $u$ versus $u'$. 
More precisely, $\vote_v(u,u') = 1$ if $v$ prefers $u$ to $u'$, it is $-1$ if $v$ prefers $u'$ to $u$, and it is 0 otherwise (so $u = u'$).
Thus in the one-to-one setting, every vertex $v$ casts its vote which is $\vote_v(u,u') \in \{0, \pm 1\}$, where $M(v) = \{u\}$ and $N(v) = \{u'\}$, 
for $M$ versus $N$ in their head-to-head election.

Recall that we are in the many-to-many setting, i.e., vertices have capacities. So we need to specify how a vertex votes over different 
subsets of its neighbors. Thus we need to compare two subsets $M(v)$ and $N(v)$ of $\nbr(v)$ for any vertex $v$, where $\nbr(v)$ is the set 
of neighbors of $v$. We will follow the method from \cite{BK19} for this comparison where
$v$ is allowed to cast up to $\capac(v)$ many votes. Any two subsets $S$ and $T$ of $\nbr(v)$ are compared by vertex $v$ as follows:
\begin{itemize}
    \item a bijection $\psi$ is chosen from $S'= S \setminus T$ to $T' = T \setminus S$;\footnote{If the two sets are not of equal size, then dummy vertices that are less preferred to all non-dummy vertices are added to the smaller set.}
    \item every neighbor $u \in S'$ is compared with $\psi(u) \in T'$;
    \item the number of wins minus the number of losses is $v$'s vote for $S$ versus $T$.
\end{itemize}

The bijection $\psi$ from $S'$ to $T'$ that is chosen will be the one that \emph{minimizes} $v$'s vote for $S$ versus~$T$. 
More formally, the vote of $v$ for $S$ versus $T$, denoted by $\vote_v(S,T)$, is defined
as follows where $|S'| = |T'| = k$ and $\Pi[k]$ is the set of permutations on $\{1,\ldots,k\}$:
\begin{equation}
  \label{eq:delta}
         \vote_v(S,T) = \min_{\sigma \in \Pi[k]}\sum_{i = 1}^k \vote_v(S'[i],T'[\sigma(i)]).
\end{equation}
Here $S'[i]$ is the $i$-th ranked resident in $S'$ and $T'[\sigma(i)]$ is the $\sigma(i)$-th ranked resident in $T'$.
Consider the following example from \cite{BK19} where $\nbr(v) = \{u_1,u_2,u_3,u_4,u_5,u_6\}$ and $v$'s preference order is
$u_1 \succ u_2 \succ u_3 \succ u_4 \succ u_5 \succ u_6$. 
Let $S = \{u_1,u_3,u_5\}$ and $T = \{u_2,u_4,u_6\}$.  So $S' = S$ and $T' = T$.
The bijection $\psi: S' \rightarrow T'$ is $\psi(u_1) = u_6$, $\psi(u_3) = u_2$, and $\psi(u_5) = u_4$ as this is the
most adversarial way of comparing $S'$ with $T'$. This results in $\vote_v(S,T) = \vote_v(u_1,u_6) + \vote_v(u_3,u_2) + \vote_v(u_5,u_4) = 1 - 1 - 1 = -1$.
Similarly, $\vote_v(T,S) = \vote_v(u_2,u_1) + \vote_v(u_4,u_3) + \vote_v(u_6,u_5) = -3$.

For any pair of matchings $M,N$ and vertex $v$, let $\vote_v(M,N) = \vote_v(S,T)$ where $S = M(v)$ and $T = N(v)$.
So $\vote_v(M,N)$ counts the number of votes by $v$ for $M(v)$ versus $N(v)$ 
when the two sets $M(v)\setminus N(v)$ and $N(v)\setminus M(v)$ are compared 
in the order that is most adversarial or {\em negative} for~$M$. 
The two matchings $M$ and $N$ are compared using $\Delta(M,N) = \sum_{v\in A \cup B} \vote_v(M(v),N(v))$.
We will say matching~$M$ is more popular than matching $N$ if $\Delta(M,N) > 0$.

\begin{definition}
\label{def:pop-mat}
A matching $M$ is popular if $\Delta(M,N) \ge 0$ for all matchings $N$ in $G$.
\end{definition}

Thus $M$ is popular if there is no matching more popular than $M$.
Recall our assumption that the set of feasible solutions is the set of 
perfect matchings. 
Hence the matchings of interest to us are {\em popular perfect matchings}, defined below.

\begin{definition}
\label{def:pop-max-mat}
    A perfect matching $M$ is a popular perfect matching if $\Delta(M,N) \ge 0$ for all perfect matchings $N$ in $G$.
\end{definition}

So a popular perfect matching need not be popular---it may lose to a smaller size matching; nevertheless, it never loses to a
perfect matching. Since weak Condorcet winners do not always exist, it is not a priori clear if a popular perfect matching always exists.
It was shown in \cite{Kav14} that there always exists a popular perfect matching in the one-to-one setting;\footnote{It was shown in \cite{Kav14} that
a popular {\em maximum} matching (i.e., a maximum matching that is popular among maximum matchings) always exists in the one-to-one setting.}  
moreover, such a matching can be computed in polynomial time. 

The following problem in the many-to-many setting was considered in~\cite{NNRS22}: vertices have capacity lower bounds and feasible matchings are those
that satisfy these lower bounds. 
It was shown that popular feasible matchings always exist and can be computed in polynomial time. Popular perfect matchings are a special case of this
problem where every vertex has to be fully matched, i.e., every vertex~$v$ has a lower bound of $\capac(v)$. Thus popular perfect matchings always exist in the
many-to-many setting and can be computed in polynomial time.

\paragraph{\bf Our problem.}
We assume there is a function $\cost: E \rightarrow \mathbb{R}$, which is part of the
input. Hence the cost of any matching $M$ is $\sum_{e\in M}\cost(e)$. 
There might be exponentially many popular perfect matchings in $G$, hence we would like to find an {\em min-cost} one, i.e.,
a popular perfect matching whose cost is least among all popular perfect matchings. 
Solving the min-cost popular perfect matching problem efficiently implies efficient 
algorithms for several desirable popular perfect matching problems such as finding one with the highest utility
when every edge has an associated utility or one with forced/forbidden edges
or an egalitarian one. In the conference management system application discussed earlier, we would like to find a popular 
perfect matching that matches as many PC members as possible along top ranked edges, subject to that, as many PC members 
as possible along second ranked edges, and so on~\cite{IKMMP06}. Such a popular perfect matching is
a min-cost popular perfect matching for an appropriate cost function. 

A polynomial-time algorithm for the min-cost popular perfect matching problem in the one-to-one setting was shown in \cite{Kav24}.
The conference version of our paper~\cite{KM23} gave a polynomial-time algorithm for this problem in the {\em many-to-one} setting (i.e., $\capac(a) = 1$ for
all $a \in A$) by reducing it to the min-cost popular perfect matching 
problem in the one-to-one setting and using the polynomial-time algorithm for computing a min-cost popular perfect matching in the one-to-one setting~\cite{Kav24}.

\paragraph{\bf The hospitals/residents setting.}
The many-to-one setting is usually referred to as the hospitals/residents setting.
The reduction in \cite{KM23} from the hospitals/residents setting to the one-to-one setting is via the {\em cloned} instance $G' = (A \cup B', E')$ corresponding to the 
given hospitals/residents instance $G = (A \cup B, E)$. Every $b \in B$ is replaced by $\capac(b)$ many clones $b_1,b_2,\ldots$ in $B'$ and the preference order of
every $b_i$ is the same as the preference order of $b$. Furthermore, every $a \in A$ that is a neighbor of $b$ in $G$ replaces the occurrence of $b$ in its
preference order with $b_1 \succ \cdots \succ b_{\capac(b)}$. 

There is a natural map $f$ from the set of popular perfect matchings in the one-to-one instance $G'$ to the set of perfect
matchings in the original instance $G$, where for any popular perfect matching~$M'$ in $G'$, the many-to-one matching $f(M')$ is 
obtained by replacing each edge $(a,b_i)$ in $M'$ with the original edge $(a,b)$. It can be shown that $f(M')$ is a popular perfect matching in $G$.  
So we have $f: \{$popular perfect matchings in $G'\} \rightarrow$ $\{$popular perfect matchings in $G\}$. 
It was shown in \cite{KM23} that the mapping $f$ is surjective.
Hence solving the min-cost popular perfect matching problem in $G'$ solves the min-cost popular perfect matching problem in~$G$.
Though it was claimed in \cite{KM23} that the same approach can be generalized to the many-to-many setting, unfortunately
it does not extend to the many-to-many setting. 

\paragraph{\bf The many-to-many setting.}
Consider $G = (A \cup B, E)$ where $A = \{a,a'\}$ and $B = \{b,b'\}$ and every vertex has capacity~2. 
Vertex preferences are as follows: $a\colon b \succ b'$ and $a'\colon b' \succ b$ along with
$b\colon a \succ a'$ and $b'\colon a' \succ a$.

There is only one perfect matching $M = \{(a,b),(a,b'),(a',b),(a',b')\}$ in $G$,
hence $M$ is a popular perfect matching. But $M$ cannot be realized as a popular perfect matching in 
the corresponding one-to-one instance $G'$.  The vertex set of $G' = (A' \cup B', E')$ is given by 
$A' = \{a_1,a_2,a'_1,a'_2\}$ and $B' = \{b_1,b_2,b'_1,b'_2\}$. Vertex preferences in $G'$ are as follows.

\begin{minipage}[c]{0.45\textwidth}
			\centering
			\begin{align*}
				&a_1\colon b_1 \succ b_2 \succ b'_1 \succ b'_2 \qquad\qquad && b_1\colon a_1 \succ a_2 \succ a'_1 \succ a'_2\\
                &a_2\colon b_1 \succ b_2 \succ b'_1 \succ b'_2 \qquad\qquad && b_2\colon a_1 \succ a_2 \succ a'_1 \succ a'_2\\
                &a'_1\colon b'_1 \succ b'_2 \succ b_1 \succ b_2 \qquad\qquad && b'_1\colon a'_1 \succ a'_2 \succ a_1 \succ a_2\\
                &a'_2\colon b'_1 \succ b'_2 \succ b_1 \succ b_2 \qquad\qquad && b'_2\colon a'_1 \succ a'_2 \succ a_1 \succ a_2\\
			\end{align*}
		\end{minipage}

Observe that $S' = \{(a_1,b_1),(a_2,b_2),(a'_1,b'_1),(a'_2,b'_2)\}$ is a stable matching in $G'$. Hence it is popular~\cite{Gar75}, so $S'$ is a popular perfect matching in $G'$.
However the matching $\{(a,b),(a',b')\}$ obtained by replacing edges in $G'$ with original edges is
{\em not} perfect in $G$. 
Note that we are not allowed to have {\em two} copies of any edge here, hence both $(a,b)$ and $(a',b')$ are present with multiplicity~1. Thus popular perfect matchings in $G$ and $G'$ are quite different from each other.

\paragraph{\bf Our approach.}
We need a different approach to solve the min-cost popular perfect matching problem in the many-to-many setting. Our approach is to reduce the min-cost
popular perfect matching problem in $G = (A \cup B, E)$ to the min-cost
stable matching problem in an auxiliary many-to-many instance $G^* = (A \cup B, E^*)$. This instance $G^*$ is a multigraph, i.e., parallel edges
are present in $G^*$.
A min-cost stable matching in a many-to-many instance  can be computed in polynomial time~\cite{Fle03,huang2010}.
Furthermore, Fleiner's algorithm~\cite{Fle03} works even in the case when the input instance is a multigraph. 
Hence a min-cost popular perfect matching in the original many-to-many instance $G$ can be computed in polynomial time. 
Thus we show the following result.

\begin{reptheorem}{restate_thm:many-many}
  \label{thm:first}
Let  $G = (A \cup B, E)$ be a many-to-many matching instance where vertices have strict preferences (possibly incomplete) and there is
a function $\cost: E \rightarrow \mathbb{R}$.
If $G$ admits a perfect matching then a min-cost popular perfect matching in~$G$ can be computed in polynomial time. 
\end{reptheorem}

\subsection{Background and Related Results}
\label{sec:background}
The notion of popularity was proposed by G\"ardenfors~\cite{Gar75} in 1975 in the stable {\em marriage} problem
(i.e., the one-to-one setting) where he observed that stable matchings are popular. It was shown
in \cite{BIM10,CHK15} that when preferences include ties (even one-sided ties), it is NP-hard to decide if a popular matching exists or not.
It was shown in \cite{HK11} that every stable matching in a marriage instance is a min-size popular matching.
Polynomial-time algorithms to find a max-size popular matching were shown in \cite{HK11,Kav14}. 
We refer to \cite{Cseh} for a survey on results in popular matchings in the one-to-one setting. 

The notion of popularity was extended from one-to-one setting to many-to-many setting in \cite{BK19} and \cite{NR17}, independently. 
A polynomial-time algorithm to compute a max-size popular matching in the many-to-many setting was given in \cite{BK19}.
It was also shown in \cite{BK19} that every stable matching in the many-to-many setting is popular; so though a rather strong definition 
of popularity was adopted here, popular matchings always exist. The definition of popularity considered in \cite{NR17} is weaker than the one 
in \cite{BK19}; in order to compare a pair of matchings $M$ and $N$, every vertex $v$ uses the bijection that compares the top neighbor in 
$M(v)\setminus N(v)$ with the top neighbor in $N(v)\setminus M(v)$, and so on, i.e., the permutation $\sigma$ in Eq.~\eqref{eq:delta} is the 
identity permutation. The max-size popular matching problem with matroid constraints (this model generalizes popular many-to-many matchings) was considered in
\cite{CKY24,Kam20} and shown to be tractable. Strongly popular matchings 
(such a matching defeats every other matching) in many-to-many instances were studied in \cite{KM20}.

The stable matching problem has been extensively studied in the hospitals/residents and the many-to-many settings~\cite{askalidis2013,Bla88,Fle03,huang2010,HIM16,IMS00,IMS03,Roth84b,Sot99} and 
a min-cost stable matching in the one-to-one setting (and thus in the hospitals/residents setting) can be computed in polynomial time~\cite{Rot92,TS98}.
Note that min-cost stable matching algorithms in the one-to-one setting do not generalize to the many-to-many setting. This is because
(unlike the hospitals/residents setting) the many-to-many setting cannot be reduced to the one-to-one setting via cloning~\cite[Footnote~6]{huang2010}. 

The algorithms in \cite{Fle03,huang2010} solve the min-cost stable matching problem in the many-to-many setting.
Fleiner's algorithm~\cite{Fle03} solves the min-cost {\em matroid kernel} problem in the intersection of two strictly ordered matroids.
This generalizes the min-cost stable matching problem in the many-to-many setting. Huang's algorithm~\cite{huang2010} solves the
min-cost {\em classified} stable matching problem when each vertex on one side of the graph has {\em classifications} with upper bounds on the
number of partners it can have in each class. Note that this problem generalizes the min-cost stable matching problem in the many-to-many setting. 
The min-cost classified stable matching problem when vertices on both sides of the graph have classifications was solved in \cite{FK16}.

There is a polynomial-time algorithm to find a min-cost popular maximum matching in the one-to-one setting~\cite{Kav24}. In contrast to this, 
finding a min-cost popular matching in the one-to-one setting is NP-hard~\cite{FKPZ18}. Nevertheless, when preferences are complete, there is a 
polynomial-time algorithm to find a min-cost popular matching in the one-to-one setting~\cite{CK18}. This result was recently extended to the 
hospitals/residents setting in \cite{KM24}; note that this is a non-trivial extension as the set of popular matchings in a hospitals/residents 
instance can be richer than those in the corresponding one-to-one instance. 

The main result in the conference version of our paper~\cite{KM23} (now included in \cite{KM24}) 
showed that in contrast to popular matchings, the set of popular {\em perfect} matchings in a hospitals/residents
instance is {\em not} richer than the set of popular perfect matchings in the corresponding one-to-one instance.
That is, each popular perfect matching in a hospitals/residents instance can be realized as a popular perfect matching in the corresponding one-to-one instance. As seen in our example earlier, this is not the case for
popular perfect matchings in the many-to-many setting. 

Interestingly, the tractability of a matching problem in the one-to-one setting does not always imply its tractability in the hospitals/residents setting. One such problem is that of finding a matching that maximizes {\em Nash social welfare}
(i.e., the geometric mean of edge utilities).
Such a matching can be easily found by the maximum weight matching algorithm in the one-to-one setting,\footnote{For any edge $(a,b)$, the weight of this edge is $\log$ of the product of utilities that $a$ and $b$ have for each other.} however it is NP-hard to find such a matching in the hospitals/residents setting~\cite{JV24}.

\subsection{Techniques}
\label{sec:techniques}
We will use the characterization of popular perfect matchings in the one-to-one setting. It was shown in
\cite{Kav14} that in order to find a popular maximum matching in a given one-to-one instance, it suffices to run the Gale-Shapley algorithm in an 
appropriate auxiliary instance (a multigraph). It was shown in \cite{Kav24} that this mapping (essentially, a projection) from the set of stable matchings 
in this auxiliary instance to the set of popular maximum matchings in the original instance is surjective. 
Thus for every popular maximum matching in the original instance, there exists a corresponding stable matching in the auxiliary instance.
Hence in the one-to-one setting, 
the min-cost stable matching algorithm in the auxiliary instance solves the min-cost popular maximum matching problem in the original instance.

Our problem is to find a min-cost popular perfect matching in a many-to-many instance $G$. We will show a many-to-many instance $G^*$ (as before, a multigraph)
such that there is a surjective mapping from the set of stable matchings in $G^*$ to the set of popular perfect matchings in $G$. 
Let $M$ be any perfect matching in $G$. We will show in Section~\ref{sec:one-one} that $M$ is a popular perfect matching in $G$ if and only if there is a realization $M'$ of $M$ such that $M'$ is 
a popular perfect matching in an appropriate one-to-one instance---this instance is a subgraph of the one-to-one cloned instance $G'$ described earlier. 
This subgraph depends on the matching $M'$, so let us call it $G'_{M'}$. Thus this reduction from the many-to-many setting to the one-to-one setting
does not seem particularly helpful since it depends on the matching $M$ that we seek. 

We will overcome the above bottleneck in Section~\ref{sec:stable} by using the result for one-to-one instances in \cite{Kav24} to construct a one-to-one multigraph~$G^0_M$ such that $M'$ can be realized as a stable matching in $G^0_M$. 
We will then use this to show a many-to-many multigraph $G^*$ such that 
$M$ can be realized as a stable matching $M^*$ in the many-to-many instance $G^*$.
Conversely, every stable matching in the instance $G^*$ projects to a popular perfect matching in $G$. Thus finding a min-cost stable matching in~$G^*$ (by Fleiner's algorithm~\cite{Fle03}) solves the min-cost popular perfect matching problem in $G$.

\section{A characterization of popular perfect matchings}
\label{sec:one-one}

This section contains our reduction from the popular perfect matching problem in a many-to-many instance $G = (A \cup B, E)$ 
to the popular perfect matching problem in a one-to-one instance---recall that this one-to-one instance will depend on the matching that we seek.

As mentioned in Section~\ref{sec:intro}, corresponding to the many-to-many instance $G$, there is a one-to-one instance 
$G' = (A'\cup B', E')$ obtained via cloning.
So $A' = \{a_i: a \in A \ \text{and}\ 1 \le i \le \capac(a)\}$, $B' = \{b_j: b \in B \ \text{and}\ 1 \le j \le \capac(b)\}$,
and $E' = \{(a_i,b_j): (a,b) \in E\ \text{and}\ 1 \le i \le \capac(a), 1 \le j \le \capac(b)\}$.
Each vertex in $G'$ has capacity~1 and a strict preference order over its neighbors. 
\begin{itemize}
    \item For any vertex $v$ and $i \in \{1,\ldots,\capac(v)\}$: the preference order of $v_i$ in $G'$ is the same as $v$'s preference order in $G$ where every neighbor $u$ of $v$ gets replaced by all its clones in the order $u_1 \succ \cdots \succ u_{\capac(u)}$. 
\end{itemize}

Let $M$ be a perfect matching in $G$ and let $M'$ be any {\em one-to-one realization} of $M$ in $G'$. A one-to-one realization $M'$
of a many-to-many matching $M$ is obtained as follows.
\begin{itemize}
    \item For each edge $(a,b) \in M$, choose an index $i \in \{1,\ldots,\capac(a)\}$ and an index $j \in \{1,\ldots,\capac(b)\}$ such that 
    the indices $i$ and $j$ have not been chosen so far and include the edge $(a_i,b_j)$ in $M'$.
\end{itemize}
Thus $M'$ is a perfect matching in the one-to-one instance $G'$.
The following subgraph $G'_{M'} = (A' \cup B', E'_{M'})$ of $G'$ (originally from \cite{BK19})
will be useful to us. Its edge set $E'_{M'}$ is as follows.
\[ E'_{M'} = M' \cup \{(a_i,b_j): (a,b)\in E\setminus M\ \text{where}\ 1 \le i \le \capac(a)\ \text{and}\ 1 \le j \le \capac(b)\}.\]
For any edge $(a,b) \notin M$, all its $\capac(a)\cdot\capac(b)$ many copies $(a_1,b_1),\ldots,(a_{\capac(a)},b_{\capac(b)})$ 
are in $E'_{M'}$ while for each edge $(a,b) \in M$, exactly {\em one} edge, which is the edge in $M'$ -- say, $(a_i,b_j)$ -- is in $E'_{M'}$. 

We will use the following edge weight function $\wt_{M'}$ in the one-to-one instance $G'_{M'}$. 
Recall the $\vote_v$-function defined in Section~\ref{sec:intro} for a vertex $v$.
For any edge $e = (a,b)$ in $G'_{M'}$, let $\wt_{M'}(a,b) = \vote_a(b,M'(a)) + \vote_b(a,M'(b))$. Thus for any $e \in E'_{M'}$:
\begin{equation*} 
\mathrm{let}\ \wt_{M'}(e) = \begin{cases} 2   & \text{if\ $e$\ blocks\ $M'$;}\\
	                     -2 &  \text{if\ the\ endpoints\ of\ $e$\ prefer\ their\ partners\ in\ $M'$\ to\ each\ other;}\\			
                              0 & \text{otherwise.}
\end{cases}
\end{equation*}

Note that the above weight function has been defined so that for any perfect matching $N'$ in~$G'_{M'}$,
we have $\wt_{M'}(N') = \Delta(N',M') = -\Delta(M',N')$.

The following theorem gives a characterization of popular perfect matchings in a many-to-many instance $G$.
Popular matchings in a many-to-many instance were characterized in \cite[Theorem~2.1]{KM20}. Theorem~\ref{prop2}
is a simpler characterization since it deals with popular {\em perfect} matchings.
This was proved in the hospitals/residents setting in \cite{KM23}; the proof of the ``only if'' part in the many-to-many setting
(given below) is more involved than the proof in the hospitals/residents setting.

\begin{theorem}
    \label{prop2}
    A perfect matching $M$ is a popular perfect matching in a many-to-many instance $G = (A \cup B, E)$ 
    if and only if there is a realization $M'$ of $M$ such that there is no alternating cycle~$C$ with respect to $M'$ in 
    the one-to-one instance $G'_{M'} = (A' \cup B', E'_{M'})$ 
    such that $\wt_{M'}(C) > 0$. 
\end{theorem}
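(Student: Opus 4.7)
My plan is to prove the two directions of Theorem~\ref{prop2} separately.

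For the ``if'' direction, suppose $M'$ admits no positive-weight alternating cycle in $G'_{M'}$ and let $N$ be any perfect matching in $G$. I would first construct any realization $N'$ of $N$ in $G'_{M'}$: edges of $M \cap N$ must use the clone pair fixed by $M'$ (only that copy survives in $G'_{M'}$), while edges of $N \setminus M$ can use any free clones since $G'_{M'}$ contains all copies of non-$M$ edges. Since $M'$ and $N'$ are both perfect in $G'_{M'}$, their symmetric difference decomposes into vertex-disjoint alternating cycles of nonpositive weight, so $\wt_{M'}(N') \le 0$; by the choice of $\wt_{M'}$ this equals $\Delta(N',M')$. The key monotonicity $\Delta(N,M) \le \Delta(N',M')$ holds vertexwise because, at each $v$, the realization-induced bijection $\Psi_v$ between $M(v) \setminus N(v)$ and $N(v) \setminus M(v)$ gives $\sum_i \vote_{v_i}(N'(v_i), M'(v_i)) = \Delta_{\Psi_v}(N(v), M(v)) \ge \vote_v(N(v), M(v))$, the latter being the minimum over bijections; summing over $v$ yields the bound. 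Hence $\Delta(N,M) \le 0$ for every perfect $N$, so $M$ is popular perfect.

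The ``only if'' direction is the harder part. The naive one-to-one argument---take any realization $M'$, find a positive alternating cycle $C$, and use the projection of $M' \oplus C$ to contradict popularity---fails for two reasons unique to the many-to-many setting. First, the projection need not be a perfect matching in $G$: if two clones of some vertex $v$ lying on $C$ are matched by $M' \oplus C$ to distinct clones of the same non-$M$ neighbor $u$, the projection collapses them into the single edge $(v,u)$ and $v$'s capacity is underfilled (this is the ``cloning gap'' phenomenon absent in the hospitals/residents setting). Second, the monotonicity above goes the wrong way: $\Delta(N,M) \le \Delta(N',M')$ is only an upper bound, so even when the projection $N$ is perfect, a positive-weight $N' = M' \oplus C$ does not directly give $\Delta(N,M) > 0$.

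My plan is to make an extremal choice of $M'$. Among all realizations of $M$, pick $M'$ that lexicographically minimizes the tuple whose first coordinate is $\max_C \wt_{M'}(C)$ over alternating cycles $C$ in $G'_{M'}$, with further coordinates (cycle length, a canonical ordering on clone assignments) used to break ties. Assuming for contradiction that this first coordinate is strictly positive, pick a minimal witness $C^*$. If the projection $N$ of $M' \oplus C^*$ is a perfect matching in $G$, the minimality of $C^*$ is designed to force $M' \oplus C^*$ to realize the adversarial bijection at every vertex along $C^*$, giving $\Delta(N,M) = \wt_{M'}(C^*) > 0$ and contradicting popularity. Otherwise the projection collapses, and the collision between two clones of some $v$ on $C^*$ permits a local swap of $M'$'s clone assignments at $v$; one argues by a careful counting that this yields a realization $\widetilde{M}$ of $M$ whose extremal tuple is strictly smaller, contradicting the minimality of $M'$. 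The hard part---and the source of the gap in the conference version---is establishing this strict decrease in the collapse case, which requires a structural lemma on how alternating cycles in $G'_{M'}$ transform under realization swaps of $M$.
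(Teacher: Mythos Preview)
Your ``if'' direction is fine and essentially matches the paper's argument; the only difference is that the paper constructs a specific realization $N'$ achieving equality $\Delta(M,N) = -\wt_{M'}(N')$, whereas you take an arbitrary $N'$ and use the inequality. Both work.

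Your ``only if'' direction, however, rests on a misconception. You claim the monotonicity ``goes the wrong way'': from $\Delta(N,M) \le \Delta(N',M')$ and $\wt_{M'}(C) > 0$ you cannot conclude $\Delta(N,M) > 0$. But that is not what has to be shown. To contradict popularity of $M$ you need $\Delta(M,N) < 0$, not $\Delta(N,M) > 0$ (these are not negatives of each other in the many-to-many setting, since each is a minimum over bijections). And for $\Delta(M,N) < 0$ the inequality runs exactly the right way: the cycle $C$ induces at every vertex $v$ on it a particular bijection $\Psi_v$ between $M(v)\setminus N(v)$ and $N(v)\setminus M(v)$, and the total vote for $M$ versus $N$ under these bijections is $-\wt_{M'}(C) < 0$; since $\vote_v(M,N)$ is the \emph{minimum} over bijections, $\Delta(M,N) \le -\wt_{M'}(C) < 0$. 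So once the projection $N$ of $M'\oplus C$ is a genuine perfect matching in $G$, you are done---no extremal choice of $M'$, no ``adversarial bijection'' lemma needed.

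The only genuine obstacle, then, is your first concern: $M'\oplus C$ may project to a multiset because $C\setminus M'$ may contain two clones $(a_i,b_j)$ and $(a_{i'},b_{j'})$ of the same non-$M$ edge $(a,b)$. The paper handles this by a short self-contained claim rather than by varying $M'$: take a \emph{shortest} positive-weight alternating cycle $C'$; if it is not valid, pick two offending copies and replace the pair $(a_i,b_j),(a_{i'},b_{j'})$ by the crossed pair $(a_i,b_{j'}),(a_{i'},b_j)$, splitting $C'$ into two shorter alternating cycles $C_1,C_2$. Because all clones of $a$ (resp.\ $b$) have identical preference lists, the four relevant $\vote$-terms are unchanged by this swap, so $\wt_{M'}(C') = \wt_{M'}(C_1) + \wt_{M'}(C_2)$, forcing one of $C_1,C_2$ to have positive weight and contradicting minimality of $C'$. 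Hence a shortest positive cycle is automatically valid, and the argument above finishes the proof. Note this shows the stronger statement that \emph{every} realization $M'$ of a popular perfect $M$ has no positive alternating cycle---your extremal-realization machinery, with its unspecified ``structural lemma on realization swaps'', is unnecessary.
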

\begin{proof}
    Suppose $M'$ is a realization of $M$ that satisfies the property that
$\wt_{M'}(C) \le 0$ for all alternating cycles $C$ in $G'_{M'}$. 
Let $N$ be any perfect matching in $G$. 
Our first goal is to show a realization $N'$ of $N$ in the one-to-one instance $G'_{M'}$ such that $\Delta(M',N') = \sum_{v\in A\cup B}\vote_v(M,N)$.
Then we will have $\wt_{M'}(N') = -\sum_{v\in A\cup B}\vote_v(M,N)$ because $\wt_{M'}(N') = \Delta(N',M') = -\Delta(M',N')$ by the definition of $\wt_{M'}$. 
Such a realization $N'$ of $N$ can be obtained in $G'_{M'}$ as follows. 

 \begin{itemize}
     \item For every edge $(a,b) \in N \cap M$: the edge $(a_i,b_j)$ is in $N'$ where $(a_i,b_j) \in M'$.
     \item For every $(a,b) \in N\setminus M$: 
     \smallskip
       \begin{itemize}
           \item In the evaluation of $\vote_a(M,N)$ while comparing $M(a) \setminus N(a)$ with $N(a) \setminus M(a)$, 
            let $b'$ be the hospital that $a$ compares $b$ with. So the matching $M'$ contains the edge $(a_i,b'_{j'})$ for some 
            $i \in \{1,\ldots,\capac(a)\}$ and $j' \in \{1,\ldots,\capac(b')\}$. 
           \item In the evaluation of $\vote_b(M,N)$ while comparing $M(b) \setminus N(b)$ with $N(b) \setminus M(b)$, 
           let $a'$ be the agent that job $b$ compares $a$ with. So the matching $M'$ contains the edge 
           $(a'_{i'},b_j)$ for some $i' \in \{1,\ldots,\capac(a')\}$ and $j \in \{1,\ldots,\capac(b)\}$. 
            \smallskip
           \item The edge $(a_i,b_j)$ will be included in $N'$.
          \end{itemize}         
\end{itemize}

Thus $\sum_{v\in A\cup B}\vote_v(M,N) = -\wt_{M'}(N') = -\sum_{C} \wt_{M'}(C)$ where the summation is over all alternating cycles $C$ 
in $M' \oplus N'$. Since $M'$ and $N'$ are perfect matchings in $G'_{M'}$, note that their symmetric difference $M' \oplus N'$ contains
only alternating cycles. Since $\wt_{M'}(C) \le 0$ for all alternating cycles $C$ in $M' \oplus N'$, we have 
$\Delta(M,N) = \sum_{v\in A\cup B}\vote_v(M,N) \ge 0$.
Thus $\Delta(M,N) \ge 0$ for any perfect matching~$N$, hence $M$ is a popular perfect matching in $G$.

\smallskip

Conversely, let $M$ be a popular perfect matching in $G$ and let $M'$ be any realization of $M$ in~$G'$. 
Let $C$ be an alternating cycle with respect to $M'$ in $G'_{M'}$.
Let us call $C$ {\em valid} if the collection of edges in $G$ that correspond to edges in 
$C$ is not a multiset. Not every alternating cycle $C$ 
with respect to $M'$ in $G'_{M'}$ is necessarily valid since  
$C$ may consist of multiple edges $(a_i,b_{i'}), \ldots, (a_j,b_{j'})$ 
that correspond to the same edge $(a,b)$ in $G$. The following claim will be very useful.

\begin{claim}
    If there exists an alternating cycle $C$ with respect to $M'$ in $G'_{M'}$ such that $\wt_{M'}(C) > 0$ then there exists a valid alternating cycle
    $C'$ with respect to $M'$ in $G'_{M'}$ such that $\wt_{M'}(C') > 0$.
\end{claim}

We will assume the above claim and complete this proof. Then we will prove the above claim.
Suppose there is an alternating cycle $C$ with respect to $M'$ in $G'_{M'}$ such that $\wt_{M'}(C) > 0$. 
Then the above claim tells us that there is a valid alternating cycle $C'$ such that $\wt_{M'}(C') > 0$. 
Moreover, $M' \cup C'$ does not become a multiset with respect to $E$ since $E'_{M'} \supseteq M' \cup C'$ has only one copy of edges in $M'$. 
 
 Let $N' = M' \oplus C'$. Observe that $\wt_{M'}(N') = \wt_{M'}(C') > 0$. 
    Since $M' \cup C'$ is not a multiset, by identifying all clones of the same vertex, 
    the matching $N'$ in $G'_{M'}$ becomes a perfect matching $N$ in $G$. 
For any vertex $v$ with no clone in $C'$, the two sets $M(v)$ and $N(v)$ are the same.
For each vertex $v$ with one or more clones in $C'$, the cycle $C'$ defines a bijection between $M(v)\setminus N(v)$ and $N(v)\setminus M(v)$.   
Since $\wt_{M'}(C') > 0$, this means for each vertex $v$ in $C'$, there is a way of comparing elements in $M(v)\setminus N(v)$ with those in $N(v)\setminus M(v)$ 
such that summed over all vertices in~$C'$, the votes in favor of $N$ outnumber the votes in favor of $M$. 

Thus adding up the votes of all vertices
in $C'$ for $M$ versus $N$ (as per the bijection defined by~$C'$), the total number of votes for $M$ is less than the total 
number of votes for $N$. Hence it follows from the definition of the function $\vote$ (see Eq.\eqref{eq:delta}) that $\sum_{v\in C'}\vote_v(M,N) < 0$.
So $\Delta(M,N) < 0$ and this contradicts the fact that $M$ is a popular perfect matching in $G$. \qed
\end{proof}

\begin{proof}[of the above claim]
Consider the shortest alternating cycle $C'$ with respect to $M'$ in $G'_{M'}$ such that $\wt_{M'}(C') > 0$.
For each edge in $M$, there is precisely one copy of this edge in $G'_{M'}$. So
if $C'$ is not valid, then it has to be the case that $C' \setminus M'$ contains more than one copy of the same edge. Let $(a,b)$ 
be such an edge. Since $(a,b) \notin M$, it follows from the construction of $G'_{M'}$ that $(a_i,b_j) \in E'_{M'}$ 
for every $i \in \{1,\ldots,\capac(a)\}$ and $j \in \{1,\ldots,\capac(b)\}$. 

Let $(a_1,b_1)$ and $(a_2,b_2)$ be two copies of $(a,b)$ in $C'$. 
Let us construct the following two shorter alternating cycles $C_1$ and $C_2$ 
by replacing the edges $(a_1,b_1)$ and $(a_2,b_2)$ in $C'$ with the edges
$(a_1,b_2)$ and $(a_2,b_1)$. This can be visualized as drawing two chords $(a_1,b_2),(a_2,b_1)$ 
inside $C'$ to create two smaller alternating cycles $C_1$ and $C_2$ by using the edges in 
$\{(a_1,b_2),(a_2,b_1)\} \cup C'\setminus\{(a_1,b_1),(a_2,b_2)\}$ (see Fig.~\ref{fig:example}). 

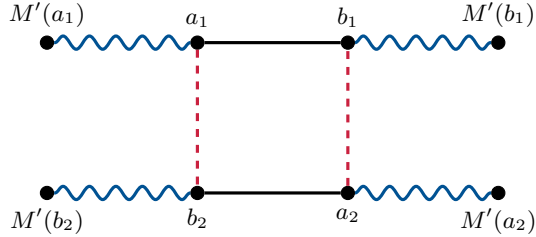
\begin{figure}[h]
\tikzstyle{vertex} = [circle, draw=black, fill=black, inner sep=0pt,  minimum size=5pt]
\tikzstyle{edgelabel} = [circle, fill=white, inner sep=0pt,  minimum size=15pt]
\centering
	\pgfmathsetmacro{\d}{2}
\begin{minipage}{0.4\textwidth}
\begin{tikzpicture}[scale=1, transform shape]
	\node[vertex, label=above:$a_1$] (a1) at (0,0) {};
	\node[vertex, label=below:$b_2$] (b1) at ($(a1) + (0, -\d)$) {};
	\node[vertex, label=above:$b_1$] (b2) at (\d,0) {};
	\node[vertex, label=below:$a_2$] (a2) at ($(b2) + (0, -\d)$) {};
	\node[vertex, label=above:$M'(a_1)$] (b3) at (-\d,0) {};
	\node[vertex, label=above:$M'(b_1)$] (a3) at ($(a1) + (2*\d, 0)$) {};
        \node[vertex, label=below:$M'(b_2)$] (a4) at ($(b3) + (0, -\d)$)  {};
         \node[vertex, label=below:$M'(a_2)$] (b4) at ($(a3) + (0, -\d)$)  {};

         \draw [very thick,red,dashed] (a1) -- (b1);
         \draw [very thick,red,dashed] (a2) -- (b2);
         \draw [very thick] (a1) -- (b2);
         \draw [very thick] (a2) -- (b1);
        \draw [very thick, blue, snake it] (a1) -- (b3);
         \draw [very thick, blue, snake it] (b1) -- (a4);
          \draw [very thick, blue, snake it] (a2) -- (b4);
           \draw [very thick, blue, snake it] (b2) -- (a3);
\end{tikzpicture}
\end{minipage}

\caption{The matching $M'$ is described by wavy blue edges. We are deleting the edges $(a_1,b_1)$ and $(a_2,b_2)$ from $C'$ and adding the dashed red edges $(a_1,b_2)$ and $(a_2,b_1)$ to form two shorter alternating cycles $C_1$ and $C_2$.}
\label{fig:example}
\end{figure}

Consider the sum $\wt_{M'}(a_1,b_1) + \wt_{M'}(a_2,b_2)$. This equals: 
\begin{eqnarray*}
            & \ & \ \vote_{a_1}(b_1, M'(a_1)) + \vote_{b_1}(a_1,M'(b_1)) + \vote_{a_2}(b_2,M'(a_2)) + \vote_{b_2}(a_2,M'(b_2))\\
            & = & \ \vote_{a_1}(b_2,M'(a_1)) + \vote_{b_1}(a_2,M'(b_1)) + \vote_{a_2}(b_1,M'(a_2)) + \vote_{b_2}(a_1,M'(b_2))\\
            & = & \ \vote_{a_1}(b_2,M'(a_1)) + \vote_{b_2}(a_1,M'(b_2)) + \vote_{a_2}(b_1,M'(a_2)) + \vote_{b_1}(a_2,M'(b_1))\\
            & = & \ \wt_{M'}(a_1,b_2) + \wt_{M'}(a_2,b_1).   
\end{eqnarray*}

Note that these equalities crucially use the structure of vertex preferences in the cloned graph $G'$. 
It follows from the above equalities that $\wt_{M'}(C') = \wt_{M'}(C_1) + \wt_{M'}(C_2)$. 
Since $\wt_{M'}(C') > 0$, it has to be the case that $\wt_{M'}(C_1) > 0$ or 
$\wt_{M'}(C_2) > 0$ (or possibly both). Since both $C_1$ and $C_2$ are shorter than $C'$,
this contradicts the definition of $C'$ as a shortest alternating cycle with positive weight under $\wt_{M'}$. 
Hence $C'$ has to be valid. \qed
\end{proof}

Theorem~2.1 in \cite{Kav24} showed that a perfect matching $M'$ in a one-to-one instance is a
popular perfect matching if and only if there is no alternating cycle $C$ with respect to $M'$ such that $\wt_{M'}(C)>0$.
Corollary~\ref{lem:char} follows from Theorem~\ref{prop2} and \cite[Theorem~2.1]{Kav24}. 

\begin{corollary}
  \label{lem:char}
  Let $M$ be any perfect matching in $G = (A \cup B, E)$. Then $M$ is a popular perfect matching in $G$ if and only if there is a realization 
  $M'$ of $M$ such that $M'$ is a popular perfect matching in the one-to-one instance $G'_{M'} = (A' \cup B', E'_{M'})$.
\end{corollary}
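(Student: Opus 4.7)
The plan is to obtain the corollary by simply chaining Theorem~\ref{prop2} with Theorem~2.1 of \cite{Kav24}. Both results characterize popularity of a perfect matching via the same intermediate condition---namely, the absence of an $M'$-alternating cycle $C$ in $G'_{M'}$ with $\wt_{M'}(C) > 0$---so the claim should follow essentially by substitution, once the two statements are aligned on the same object.

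Concretely, I would first verify that $G'_{M'}$ is a genuine one-to-one instance in which $M'$ is a perfect matching. Each cloned vertex has capacity $1$ by construction, and for every $M$-edge $(a,b)$ only the matched copy $(a_i,b_j) \in M'$ appears in $E'_{M'}$, so $M'$ carries no parallel copies; moreover, by the definition of a realization, $M'$ saturates every cloned vertex in $A' \cup B'$. Applying \cite[Theorem~2.1]{Kav24} to the pair $(G'_{M'}, M')$ then gives: $M'$ is a popular perfect matching in $G'_{M'}$ if and only if there is no $M'$-alternating cycle $C$ in $G'_{M'}$ with $\wt_{M'}(C) > 0$. Substituting this equivalence into the right-hand side of Theorem~\ref{prop2} yields both directions of the corollary, quantified over realizations $M'$ of $M$.

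The only point requiring care is to confirm that the weight function $\wt_{M'}$ and the notion of an $M'$-alternating cycle used by \cite[Theorem~2.1]{Kav24} on the one-to-one instance $G'_{M'}$ coincide exactly with those appearing in Theorem~\ref{prop2}. Since $\wt_{M'}$ was defined in Section~\ref{sec:one-one} purely in terms of the two endpoints' votes against their $M'$-partners---a definition which makes sense identically in any one-to-one instance containing $M'$---and since alternating cycles relative to a perfect matching in a one-to-one graph are unambiguous, no further reconciliation is needed. Thus I do not anticipate a real obstacle; the proof reduces to a one-line composition of the two equivalences.
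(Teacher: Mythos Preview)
Your proposal is correct and matches the paper's own derivation exactly: the paper states that Corollary~\ref{lem:char} follows from Theorem~\ref{prop2} together with \cite[Theorem~2.1]{Kav24}, i.e., by composing the two equivalences through the shared intermediate condition ``no $M'$-alternating cycle of positive $\wt_{M'}$-weight in $G'_{M'}$.'' Your sanity checks about $G'_{M'}$ being a genuine one-to-one instance with $M'$ perfect and the $\wt_{M'}$ definitions coinciding are apt and no additional ideas are needed.
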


\section{A stable matching problem}
\label{sec:stable}

Let $M$ be a popular perfect matching in $G$. Section~\ref{sec:one-one} showed that 
for any one-to-one realization $M'$ of $M$, the matching
$M'$ is a popular perfect matching in $G'_{M'}$.
This fact will be used in this section to show the existence of a {\em colorful} $M$ (call it $M^*$) such that $M^*$ 
is a stable matching in a colorful many-to-many instance $G^*$. We will first construct a colorful one-to-one instance in Section~\ref{sec:color-one} and the colorful many-to-many instance $G^*$ will be constructed in Section~\ref{sec:color-many}.

\subsection{A colorful one-to-one instance} 
\label{sec:color-one}
Let $M$ be a perfect matching in $G$ and let $M'$ be any realization of $M$.
Corresponding to the instance $G'_{M'} = (A' \cup B', E'_{M'})$, we will construct the following one-to-one instance 
$G^0_{M'} = (A' \cup B', E^0_{M'})$.
The edge set $E^0_{M'}$ of this graph is  $\{e_c: e\in E'_{M'}$ and $1 \le c \le n_0\}$
where $n_0 = |A'|$. That is, every edge $e = (a_i,b_j)$ in $G'_{M'}$ has $n_0$ parallel copies $e_1,\ldots,e_{n_0}$ in $G^0_{M'}$. 
Thus $G^0_{M'}$ is a multigraph. As done in other papers, it will be convenient to visualize the $n_0$ copies of any edge $e$, i.e., $e_1,\ldots,e_{n_0}$, 
as $e$ colored by $n_0$ different colors: $\text{color}~1,\ldots,$  $\text{color}~n_0$ (see Fig.~\ref{fig:one}). 

\begin{figure}[h]
\tikzstyle{vertex} = [circle, draw=black, fill=black, inner sep=0pt,  minimum size=5pt]
\tikzstyle{edgelabel} = [circle, fill=white, inner sep=0pt,  minimum size=12pt]
\centering
	\pgfmathsetmacro{\d}{2}
\begin{tikzpicture}[scale=1, transform shape]
	\node[vertex, label=above:$a_i$,label=-135:] (b1) at (-2,0) {};
        \node[vertex, label=above:$b_j$, label=-45:] (x1) at (4,0) {};
        \draw [very thick, blue] (b1) arc (-120:-60:6cm)  (x1);
         \draw [very thick, orange] (b1) arc (-105:-75:11.6cm)  (x1);
        \draw [very thick, green] (x1) arc (60:120:6cm)  (b1);
         \draw [very thick, cyan] (x1) arc (75:105:11.6cm)  (b1);
        \draw [very thick, red] (b1) --  (x1);
\end{tikzpicture}
\caption{Every edge $(a_i,b_j)$ in $G'_{M'}$ gets replaced by $n_0$ parallel edges in $G^0_{M'}$ where each of these edges has a distinct color in $\{1,\ldots,n_0\}$.}
\label{fig:one}
\end{figure}
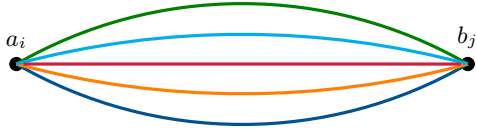

\vspace*{-0.8cm}

\begin{remark}
Recall that $|A'| = |B'|$ since $G$ admits a perfect matching, i.e., one that fully matches all vertices. Note that
$n_0 = |A'| = \sum_{a\in A}\capac(a) \le |A|\cdot|B|$ since $\capac(a) \le \delta(a) \le |B|$ for each $a \in A$ where $\delta(a)$ is $a$'s degree in $G$. 
This is because an agent has to get matched to distinct jobs in its neighborhood. Thus the size of $G^0_{M'}$ is polynomial in the size of $G$.
\end{remark}

Vertex preferences in $G^0_{M'}$ on their incident edges are as follows:
\begin{itemize}
    \item Every $a_i \in A'$ prefers any color~$c$ edge to any color~$c'$ edge where $c < c'$. Among edges of the same color, 
    $a_i$'s preference order in $G^0_{M'}$ is the same as $a_i$'s preference order in $G'_{M'}$.
    \smallskip
    \item Every $b_j \in B'$ prefers any color~$c'$ edge to any color~$c$ edge where $c < c'$. Among edges of the same color, 
    $b_j$'s preference order in $G^0_{M'}$ is the same as $b_j$'s preference order in $G'_{M'}$.
\end{itemize}

\paragraph{Stable matchings in $G^0_{M'}$.}
A matching $N$ in the multigraph $G^0_{M'}$ is a subset of $E^0_{M'}$ such that every vertex in $A' \cup B'$ has at most one edge of $N$ incident to it. An edge $e_c = (a_i,b_j)_c$ in $G^0_{M'}$ 
blocks $N$ if (i)~$a_i$ is unmatched or 
prefers $e_c$ to its $N$-edge and (ii)~$b_j$ is unmatched or prefers $e_c$ to its $N$-edge.
As before, a matching $N$ is stable if there is no edge that blocks it.

For any matching $N$ in the colorful instance $G^0_{M'}$, the matching obtained in $G'_{M'}$ by ignoring its edge colors will be called the {\em projection} of $N$ in $G'_{M'}$. 
\begin{itemize}
    \item[$\bullet$] It was shown in \cite[Theorem~2.4]{Kav24} that in the one-to-one setting, any stable matching in the colorful instance projects to a popular maximum matching in the original instance. That is,
    any stable matching in $G^0_{M'}$ projects to a popular perfect matching in~$G'_{M'}$. 
    
    \item[$\bullet$] The converse of the above was shown in \cite[Theorem~3.4]{Kav24}: for any popular maximum matching in the original instance, 
    there is a stable matching in the colorful instance whose projection is this popular maximum matching. That is, for any popular perfect matching in
    $G'_{M'}$, there is a stable matching in $G^0_{M'}$ that projects to it.
\end{itemize}

Since we are interested in properties of the matching $M'$ in $G'_{M'}$,
we state the above two results from \cite{Kav24} in the following form.

\begin{theorem}[\cite{Kav24}]
    \label{thm:old}
    $M'$ is a popular perfect matching in $G'_{M'}$ if and only if there exists a colorful version of $M'$ that 
    is stable in $G^0_{M'}$.
\end{theorem}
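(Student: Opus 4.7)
The plan is to deduce this theorem as a direct packaging of the two results from \cite{Kav24} stated immediately above the theorem, applied to the one-to-one instance $G'_{M'}$ together with its colorful multigraph $G^0_{M'}$. Since $G'_{M'}$ admits $M'$ as a perfect matching, popular maximum matchings and popular perfect matchings of $G'_{M'}$ coincide, so the cited results apply verbatim; no extra argument about the structure of alternating cycles or of vertex preferences in $G^0_{M'}$ is needed here, because Corollary~\ref{lem:char} has already done the reduction from the many-to-many instance $G$ to the one-to-one instance $G'_{M'}$.

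For the $(\Leftarrow)$ direction, I would assume $\tilde M$ is a colorful version of $M'$ that is stable in $G^0_{M'}$. By the very definition of a colorful version, forgetting the color on each edge of $\tilde M$ yields precisely $M'$, so the projection of $\tilde M$ in $G'_{M'}$ is $M'$. Invoking \cite[Theorem~2.4]{Kav24}, which says that the projection of any stable matching in the colorful instance is a popular maximum matching in the underlying one-to-one instance, then shows that $M'$ is a popular perfect matching in $G'_{M'}$.

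For the $(\Rightarrow)$ direction, I would assume that $M'$ is a popular perfect matching (equivalently, a popular maximum matching) in $G'_{M'}$. By the surjectivity result of \cite[Theorem~3.4]{Kav24}, there exists a stable matching $\tilde M$ in $G^0_{M'}$ whose projection to $G'_{M'}$ is $M'$. Such a $\tilde M$ assigns exactly one color in $\{1,\ldots,n_0\}$ to each edge of $M'$, and is therefore a colorful version of $M'$ that is stable in $G^0_{M'}$.

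The main obstacle is purely bookkeeping: one must verify that our construction of $G^0_{M'}$ (with $n_0=|A'|$ parallel colored copies of each edge, agents preferring smaller colors, jobs preferring larger colors, and within-color ties broken by the preferences of $G'_{M'}$) matches the colorful multigraph construction over which \cite{Kav24} proves its two theorems, so that those theorems apply off the shelf to $G'_{M'}$. Once this identification is made, the two implications above are immediate.
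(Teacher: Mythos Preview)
Your proposal is correct and matches the paper's treatment exactly: the paper does not prove Theorem~\ref{thm:old} but simply restates the two bulleted results from \cite{Kav24} (Theorems~2.4 and~3.4 there) specialized to the matching $M'$, and your write-up makes this packaging explicit. The only verification you flag---that the construction of $G^0_{M'}$ coincides with the colorful multigraph used in \cite{Kav24}---is indeed the sole thing to check, and it holds by design.
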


\subsection{A colorful many-to-many instance}
\label{sec:color-many}
Consider the many-to-many instance obtained by replacing 
every edge $e = (a,b)$ in $G$ with $n_0$ parallel copies $e_1,\ldots,e_{n_0}$.
As done earlier, we will view the $n_0$ copies of any edge $e$, i.e., $e_1,\ldots,e_{n_0}$, as $e$ colored by $n_0$ different colors: $\text{color}~1,\ldots,\text{color}~n_0$ (see Fig.~\ref{fig:one}). 
Thus we have a multigraph $G^* = (A \cup B, E^*)$ which is a {\em colorful} version of $G = (A \cup B, E)$. 
Vertex preferences in $G^*$ are analogous to vertex preferences in $G^0_{M'}$. 
\begin{enumerate}
    \item Agents prefer {\em lower} color edges to {\em higher} color edges while jobs prefer {\em higher} color edges to {\em lower} color edges. 
    \item Among edges of the same color, the preference order of any vertex on its incident edges in $G^*$ is the same as its preference order in $G$.
\end{enumerate}

\begin{definition}
A matching $M^*$ in $G^*$ is a subset of $E^*$ such that $|M^* \cap \{e_1,\ldots,e_{n_0}\}| \le 1$ for every edge $e \in E$
and $|M^*(v)| \le \capac(v)$ for every vertex $v \in A \cup B$.     
\end{definition}

\paragraph{\bf Stable matchings in $G^*$.}
Let $M^*$ be a matching in $G^*$ and let $M$ be the projection or {\em colorless} version of $M^*$.
\begin{itemize}
    \item An edge $e_c$ in $G^*$, where $e = (a,b) \notin M$,  
    is said to {\em block} $M^*$ if (i)~$a$ is not fully matched in $M^*$ or 
prefers $e_c$ to its worst $M^*$-edge and (ii)~$b$ is not fully matched in $M^*$ or 
prefers $e_c$ to its worst $M^*$-edge. 
    \item A matching $M^*$ is {\em stable} in $G^*$ if there is no edge that blocks $M^*$.
\end{itemize}

Let $M^*$ be any matching in $G^*$. Let $M^0$ be a one-to-one realization of $M^*$, i.e., for every edge $(a,b)_c \in M^*$: indices 
$i \in \{1,\ldots,\capac(a)\}$ and $j \in \{1,\ldots,\capac(b)\}$ are chosen such that neither $i$ nor $j$ has been chosen so far 
and $(a_i,b_j)_c$ is included in $M^0$. Thus the endpoints of edges in $M^0$ are in $A' \cup B'$.
Note that the same color~$c$, that was used for $(a,b)$ in $M^*$, is used for $(a_i,b_j)$ in $M^0$.

Observe that $M^0$ is a matching in $G^0_{M'}$ where $M'$ is the projection or colorless version of~$M^0$.
So $M'$ is a one-to-one realization of $M$
and $G^0_{M'}$ is the colorful one-to-one instance (from Section~\ref{sec:color-one})
corresponding to the graph $G'_{M'}$ defined in Section~\ref{sec:one-one}. 
The following lemma will be useful.

\begin{lemma}
\label{stable:many-to-one}
If $M^*$ is a stable matching in $G^*$ then $M^0$ is a stable matching in $G^0_{M'}$ 
where $M^0$ is any realization of $M^*$. 
\end{lemma}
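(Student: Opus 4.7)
I will argue the contrapositive: if $M^0$ is not stable in $G^0_{M'}$, then $M^*$ is not stable in $G^*$. Let $(a_i,b_j)_c \in E^0_{M'}$ be a blocking edge of $M^0$; the goal is to show that the underlying edge $(a,b)_c \in E^*$ blocks $M^*$ in $G^*$.

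First I dispose of the case $(a_i,b_j) \in M'$, since the $G^*$-blocking definition requires $(a,b) \notin M$. By construction of $E^0_{M'}$, either $(a_i,b_j) \in M'$ or $(a,b) \in E\setminus M$. In the former case $M^0$ contains a unique colored copy $(a_i,b_j)_{c^*}$, so both $a_i$ and $b_j$ are matched via this single edge; for $(a_i,b_j)_c$ with $c \ne c^*$ to block, the agent-side condition forces $c<c^*$ (agents prefer lower colors) while the job-side forces $c>c^*$ (jobs prefer higher colors), a contradiction. Hence $(a,b) \in E\setminus M$.

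The core step is to verify the two blocking conditions of $(a,b)_c$ in $G^*$. For the agent side, if $a$ is not fully matched in $M^*$ we are done, so assume $|M^*(a)| = \capac(a)$, which forces every clone of $a$ to be matched in $M^0$; in particular $a_i$ has a partner $(a_i,b')_{c'} \in M^0$, corresponding to $(a,b')_{c'} \in M^*$. The $G^0_{M'}$-blocking condition at $a_i$ gives $c \le c'$, with equality forcing $a$ to prefer $b$ to $b'$ in $G$. Let $(a,b_*)_{c_*}$ be the worst partner of $a$ in $M^*$; by the $G^*$-preference (highest color first, ties broken by the worst in $G$), $c_*$ is the maximum color among $M^*(a)$'s edges, so $c' \le c_*$. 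I then split into two sub-cases: if $c < c_*$ then $a$ prefers $(a,b)_c$ to $(a,b_*)_{c_*}$ by color alone; if $c=c_*$ then $c=c'=c_*$, the tie-break gives $b \succ_G b'$, and since $b'$ attains color $c_*$, the choice of $b_*$ as the worst such partner yields $b' \succeq_G b_*$, whence $a$ prefers $b$ to $b_*$ by transitivity. The job side is entirely symmetric: the colors flip because jobs prefer higher colors, and the comparison is against the $M^*$-partner of $b$ with the \emph{smallest} color.

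\textbf{Main obstacle.} The delicate step is the worst-partner reasoning just above: the blocking clone $a_i$ in $G^0_{M'}$ need not be matched to $a$'s worst $M^*$-partner, yet the $G^*$-blocking condition compares against that worst partner. The colorful construction is precisely what bridges this gap: the color of $a_i$'s $M^0$-match upper-bounds the color $c_*$ of $a$'s worst partner, and transitivity of the underlying $G$-preference cleans up the residual within-color comparison. Everything else (handling the $(a_i,b_j) \in M'$ case, translating ``unmatched clone'' into ``not fully matched'' in $M^*$, and the symmetric argument on the job side) is routine.
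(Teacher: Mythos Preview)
Your proposal is correct and follows essentially the same contrapositive route as the paper: first rule out the case $(a_i,b_j)\in M'$ via the opposite color preferences of agents and jobs, then translate the $G^0_{M'}$-blocking condition at the clones into the $G^*$-blocking condition at the original vertices. The paper is terser on the worst-partner step (it simply asserts that ``$x_i$ prefers $e_c$ to its $M^0$-edge'' implies ``$x$ prefers $e'_c$ to its worst $M^*$-edge''), whereas you unpack this via an explicit color case-split, but the underlying argument---transitivity through the particular $M^*$-edge corresponding to $a_i$'s $M^0$-partner---is identical.
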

\begin{proof}
    Suppose $M^0$ is not stable in $G^0_{M'}$. Then there is an edge $e_c$ that blocks $M^0$ where 
    $e_c = (x_i,y_j)_c$ for some $(x,y) \in E$ and $c \in \{1,\ldots,n_0\}$. 
    Note that it has to be the case that $(x,y) \notin M$ where $M$ is the projection of $M^*$. 
    Suppose $(x,y) \in M$. Then there is exactly {\em one} edge corresponding to $(x,y)$ in the graph $G'_{M'}$ and it has to be $e = (x_i,y_j)$. So there are $n_0$ edges $e_1,\ldots,e_{n_0}$ in $G^0_{M'}$. None of them can block $M^0$ because if $e_{c'} \in M^0$ 
    then agent $x_i$ prefers $e_{c'}$ to $e_{c'+1},\ldots,e_{n_0}$ (agents prefer lower color edges) while job $y_j$ prefers $e_{c'}$ to $e_1,\ldots,e_{c'-1}$ (jobs prefer higher color edges).

    So there is some edge $(x,y) \notin M$ such that for some $i \in \{1,\ldots,\capac(x)\}, 
    j \in \{1,\ldots,\capac(y)\}$, and color $c \in \{1,\ldots,n_0\}$, the edge $e_c$ in $G^0_{M'}$ blocks $M^0$ where $e = (x_i,y_j)$.
    So either $x_i$ is unmatched or prefers $e_c$ to its $M^0$-edge and similarly, either $y_j$ is unmatched or prefers $e_c$ to its $M^0$-edge. 
    This means in the instance $G^*$, there is an edge $e' = (x,y) \notin M$ and color $c \in \{1,\ldots,n_0\}$
    such that 
    either $x$ is not fully matched in $M^*$ or prefers $e'_c$ to its worst $M^*$-edge and similarly,
    either $y$ is not fully matched in $M^*$ or prefers $e'_c$ to its worst $M^*$-edge. Thus $e'_c$ blocks $M^*$, a contradiction to its stability in $G^*$. \qed
\end{proof}

As we show next, the converse of the above lemma also holds.

\begin{lemma}
\label{stable:one-to-many}
For any matching $M^*$ in $G^*$, if there is a realization  $M^0$ of $M^*$ such that $M^0$ is stable in the one-to-one instance $G^0_{M'}$
(where $M'$ is the projection of $M^0$) then $M^*$ is stable in $G^*$.
\end{lemma}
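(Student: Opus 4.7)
I plan to prove the contrapositive: assume $M^0$ is a stable realization of $M^*$ in $G^0_{M'}$, and suppose, towards a contradiction, that $M^*$ is \emph{not} stable in $G^*$. Then there is a blocking edge $e_c=(x,y)_c$ with $e=(x,y)\notin M$, where $M$ is the colorless projection of $M^*$. The goal will be to exhibit clone indices $i\in\{1,\ldots,\capac(x)\}$ and $j\in\{1,\ldots,\capac(y)\}$ such that the colored clone edge $(x_i,y_j)_c$ blocks $M^0$ in $G^0_{M'}$. The fact that $(x,y)\notin M$ is what makes this possible: by the construction of $E'_{M'}$ (and hence $E^0_{M'}$), all $\capac(x)\cdot\capac(y)\cdot n_0$ colored clone copies of $(x,y)$ are present in $G^0_{M'}$, so any pair $(i,j)$ and color $c$ that I choose yields an actual edge of $G^0_{M'}$.

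The choice of $i$ is done by case analysis on the blocking condition at $x$. If $x$ is not fully matched in $M^*$, then some clone $x_i$ must be unmatched in $M^0$ (since the realization places $|M^*(x)|<\capac(x)$ many edges among the $\capac(x)$ clones of $x$), and I take this $i$; condition (a') holds for any $j$. Otherwise $x$ is fully matched and prefers $e_c$ to its worst $M^*$-edge; let that worst edge be $f^*=(x,z)_{c^*}$, and let $x_{i^*}$ be the clone such that the edge of $M^0$ incident to $x_{i^*}$ is the realization $(x_{i^*},z_{j^*})_{c^*}$ of $f^*$. Take $i=i^*$. I then need to verify that $x_{i^*}$ prefers $(x_{i^*},y_j)_c$ to $(x_{i^*},z_{j^*})_{c^*}$ in $G^0_{M'}$, for \emph{every} $j$. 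Since agents in $G^*$ prefer lower colors, $x$ preferring $e_c$ to $f^*$ forces either $c<c^*$ (handled directly by the lower-color rule for agents in $G^0_{M'}$) or $c=c^*$ together with $y\succ z$ in $x$'s preferences in $G$. The case $y=z$ is ruled out by $(x,y)\notin M$ while $(x,z)\in M$. In the equal-color case, the within-color preference of $x_{i^*}$ in $G^0_{M'}$ is inherited from $G'_{M'}$ and hence from $x$'s order in $G$, so $y\succ z$ lifts to $(x_{i^*},y_j)_c\succ(x_{i^*},z_{j^*})_c$ regardless of $j$.

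The index $j$ is chosen by the symmetric argument on the $y$-side, noting that jobs prefer \emph{higher} colors in both $G^*$ and $G^0_{M'}$, so $y$ preferring $e_c$ to its worst $M^*$-edge $(x',y)_{c^\#}$ forces $c>c^\#$ or $c=c^\#$ with $x\succ x'$ in $y$'s order in $G$. The case $x=x'$ is again excluded by $(x,y)\notin M\ni(x',y)$. Choosing $y_j=y_{j^\#}$, the clone to which the realization of $(x',y)_{c^\#}$ is assigned, gives condition (b') for every $i$. Crucially, the two choices of $i$ and $j$ are \emph{independent}, so combining them yields a single edge $(x_i,y_j)_c\in E^0_{M'}$ that blocks $M^0$, contradicting the stability of $M^0$.

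The main obstacle, and the only place that requires genuine care, is the equal-color subcase: there one has to invoke both the specific way preferences are defined within a color in $G^0_{M'}$ (namely, via $G'_{M'}$, which mirrors the $G$-preferences of the uncloned endpoint) and the fact that $(x,y)\notin M$ to ensure the partner-vertex in the worst $M^*$-edge is distinct from $y$ (resp.\ $x$). Everything else is an essentially routine translation between the colored many-to-many and the colored one-to-one settings, exploiting that colors are handled identically in $G^*$ and in $G^0_{M'}$.
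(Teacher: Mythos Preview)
Your proof is correct and follows essentially the same approach as the paper's: both prove the contrapositive by taking a blocking edge $(x,y)_c$ in $G^*$ with $(x,y)\notin M$, selecting a clone index $i$ for $x$ (either an unmatched clone or the clone carrying the worst $M^*$-edge at $x$) and symmetrically a clone index $j$ for $y$, and then using $(x,y)\notin M$ to guarantee that $(x_i,y_j)_c$ is present in $G^0_{M'}$ and blocks $M^0$. Your write-up is more explicit than the paper's in unpacking the equal-color subcase and in noting that the choices of $i$ and $j$ are independent, but the underlying argument is the same.
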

\begin{proof}
    Suppose $M^*$ is not stable in $G^*$. Then there is an edge $e \notin M$ 
    and color $c \in \{1,\ldots,n_0\}$ such that $e_c$ blocks $M^*$. Let $e = (x,y)$. 
    So either $x$ is not fully matched in $M^*$ or 
    prefers $e_c$ to its worst $M^*$-edge and similarly, either $y$ is not fully matched in $M^*$ or prefers $e_c$ to its worst $M^*$-edge. 
    If $x$ (resp., $y$) is fully matched in $M^*$ then let $e'_{c'}$ (resp., $e''_{c''}$) be the worst edge in $M^*$ that is incident to $x$ (resp., $y$); 
    let $e' = (x,y')$ and $e'' = (x',y)$. 
    
    For some $i \in \{1,\ldots,\capac(x)\}$, the one-to-one matching $M^0$ either leaves $x_i$ unmatched or contains the edge $(x_i,y'_{j'})_{c'}$ 
    where $j' \in  \{1,\ldots,\capac(y')\}$. Similarly, for some $j \in \{1,\ldots,\capac(y)\}$, the matching $M^0$ either leaves $y_j$ unmatched or
    contains the edge $(x'_{i'},y_{j})_{c''}$ where $i' \in \{1,\ldots,\capac(a')\}$. 
    Since $(x,y) \notin M$, all the $n_0$ edges $(x_i,y_j)_1,\ldots,
    (x_i,y_j)_{n_0}$ are in $G^0_{M'}$. Thus the edge $(x_i,y_j)_c$ belongs to $E^0_{M'}$ and it blocks $M^0$,
    a contradiction to its stability in $G^0_{M'}$. Hence it has to be the case that $M^*$ is stable in $G^*$. \qed
\end{proof}

Thus we can conclude the following.
\begin{itemize}
    \item  For any popular perfect matching $M$ in $G$, there exists a colorful version of $M$ that is stable in~$G^*$
          (by Theorem~\ref{prop2}, Theorem~\ref{thm:old}, and Lemma~\ref{stable:one-to-many}).
    \item Conversely, for any stable matching $M^*$ in $G^*$, the matching obtained by ignoring its edge colors is a popular perfect matching in $G$
          (by Lemma~\ref{stable:many-to-one}, Theorem~\ref{thm:old}, and Theorem~\ref{prop2}).
\end{itemize}

Summarizing, a perfect matching in $G$ is a popular perfect matching if and only if there is a {\em colorful} version of it that is stable in the colorful multigraph $G^*$. Thus solving the min-cost stable matching problem in the many-to-many instance $G^*$ solves the min-cost popular perfect matching problem in~$G$.
As mentioned earlier,
there are polynomial-time algorithms to solve the min-cost stable matching problem in a many-to-many instance~\cite{Fle03,huang2010}. 

Recall that our many-to-many instance is a {\em multigraph}. However that does not pose any additional difficulty because Fleiner's algorithm~\cite{Fle03} solves the min-cost matroid kernel problem in the intersection of two strictly ordered matroids. The matroid kernel problem 
generalizes the stable matching problem in a many-to-many instance where the underlying graph is a multigraph.\footnote{Many-to-many matchings in a simple bipartite graph are the intersection of two partition matroids while many-to-many matchings in a bipartite graph with parallel edges are the intersection of two laminar matroids.}
Thus the min-cost stable matching problem in a many-to-many instance, where the underlying graph is a multigraph, can be solved in polynomial time by Fleiner's algorithm~\cite{Fle03}. 
Hence we can conclude Theorem~\ref{thm:first} stated in Section~\ref{sec:intro}. 
We restate this result below for convenience.

\repeattheorem{restate_thm:many-many}

\section{Conclusions and open problems}
\label{sec:open}
We characterized popular perfect matchings in a many-to-many instance as projections of stable matchings in a colorful auxiliary instance. This yielded
a polynomial-time algorithm for the min-cost popular perfect matching problem in a many-to-many instance.

An interesting open problem is to show a polynomial-time algorithm for the min-cost popular {\em maximum} matching
problem in a hospitals/residents instance. Recall that a polynomial-time algorithm is known for this problem in the one-to-one 
setting. This is by a reduction to a min-cost stable matching problem in a colorful instance~\cite{Kav24}.
However this reduction does not extend to the hospitals/residents setting 
because there may exist popular maximum matchings in a hospitals/residents instance
that do not occur as stable matchings in the corresponding colorful  instance; such an example was given in \cite{KM24}. 
A more general open problem is to solve the min-cost popular maximum matching problem in a {\em many-to-many} instance.

\paragraph{\bf Acknowledgements.} Thanks to Naoyuki Kamiyama for asking us about the computational complexity of the min-cost popular maximum matching problem 
in the many-to-many setting. Telikepalli Kavitha acknowledges support by the Department of Atomic Energy, Government  of India, under project no. RTI4001. 
Kazuhisa Makino acknowledges support by JSPS KAKENHI Grant Numbers JP20H05967, JP19K22841,  JP20H00609, and the joint project of Kyoto University and Toyota Motor Corporation, titled ``Advanced Mathematical Science for Mobility Society''.

\bibliographystyle{abbrv}

\end{document}